%
%

\documentclass[aip,jmp,numerical,12pt]{revtex4-1}

\usepackage{amssymb}
 \usepackage{amsthm,amsmath}
 \usepackage{hyphenat,hyperref}
\usepackage[utf8]{inputenc}
\usepackage[T1]{fontenc}
\usepackage{lineno}
\usepackage{xcolor, color, soul}
\usepackage{physics}
\newcommand{\stlines}{\mathbf{\tilde{T}}}
\def\C{\mathbb{C}} 
\def\uhp{{\mathbb H}}   

\def\dR{{\rm dR}}                    
 \newtheorem{definition}{Definition}
 \newtheorem{theorem}{Theorem}
 \newcommand{\mat}[4]{
     \begin{pmatrix}
            #1 & #2 \\
            #3 & #4
       \end{pmatrix}
    }    
\begin{document}



\title{ Metric in the moduli of SU(2) monopoles from spectral curves and Gauss-Manin connection in disguise.}




\author{Marcus A. C. Torres}
\email[]{mtorres@impa.br}
\homepage[]{www.impa.br/~mtorres}
\affiliation{Instituto Nacional de Matemática Pura e Aplicada (IMPA),\\ Estrada Dona Castorina 110,22460-320, Rio de Janeiro,RJ, Brazil}


\date{\today}

\begin{abstract}
We show here that from the metric of the manifold $M^0_2$, i.e., the reduced moduli of $SU(2)$ 2-monopoles in Yang-Mills-Higgs theory, one can recover the respective moduli of spectral curves using the method Gauss-Manin connection in disguise. This work is a step towards creating a inverse process of finding the metric of any $M^0_k$,  from spectral curves. This is a thirty years old problem that we hope to shed some light in it. 
\end{abstract}

\pacs{}

\maketitle 

\section{Introduction}
\label{intro}

The study of instantons and monopoles in three and four dimension are among the most comprehensive research areas at the interface of physics and mathematics. From the physics point of view, they relate to solitons, dualities and non-perturbative  Yang-Mills theories. From the mathematical point of view, it involves knowledge of Analysis, Differential Geometry, Algebraic Geometry and Twistor theory. In this work, we add elements of Hodge theory to relate the metric of the moduli space of charge k monopoles $M^0_k$  and its spectral curves in SU(2) Yang-Mills-Higgs (YMH) theory in three spacial dimensions (static monopoles).

YMH monopoles in three dimensions are equivalent to instanton solutions of Yang-Mills theory in Euclidian four dimensions constrained by the fact that gauge fields do not depend on the fourth direction. In this equivalence, the Higgs field is the fourth component of the gauge field in four dimensions. In this way, the twistor methods applied in instantons were also adapted to YMH monopoles \cite{1982CMaPh..83..579H}.

In a recent work \cite{Morales:2017phm}, while revising the  metric of $M^0_2$ reduced moduli of 2-monopoles, the  present author noticed that  the moduli of enhanced elliptic curves obtained from the self-dual metric  should correspond to the moduli of spectral curves of 2-monopoles in its enhanced version. Notice\cite{1982CMaPh..83..579H} that the spectral curve of a $k$-monopole is an algebraic curve of genus $(k-1)^2$  and for $k$=2, the spectral curve is an elliptic curve \cite{Hurtubise:1983hl}.

The  moduli of enhanced elliptic curves appears from the metric of $M^0_2$ using the method Gauss-Manin connection in disguise developed by Movasati\cite{alim2016gauss,ho06-1,ho14} which shows that  the set of Darboux-Halphen differential equations obtained from self-duality of the metric of $M^0_2$ are a vector field in the moduli of an enhanced elliptic curve. 

In here, the guess made in our previous collaboration\cite{Morales:2017phm} is proved. 

In sections \ref{kmonopoles} and \ref{scurves} we review basic elements of  SU(2) monopoles and spectral curves following closely some original articles  \cite{atiyah1977instantons,1982CMaPh..83..579H,Hurtubise:1983hl} and reference books \cite{atiyah2014geometry, ward1990twistor}.  
In section \ref{gmcd} we quickly review the program Gauss-Manin connection in disguise for elliptic curves\cite{ho14}  and in section \ref{spectraltometric}  we show how the moduli of spectral curves of 2-monopoles emerge from the metric of $M^0_2$. In \ref{conclusion}  we summarize this article and comment about the cases $k>2$ and the issue of different parametrization of universal families of curves and the  weights of the respective set of modular-type functions attached to such curves.

\section{\textit{k}-Monopoles}\label{kmonopoles}
A $k$-monopole or BPS-monopole of charge $k$ in Yang-Mills-Higgs theory is a static soliton in  $\mathbb{R}^3$ that is a solution of the Bogomolny equation \cite{Bogomol`nyi_1976}:
\begin{align}
F=\star D \phi, \qquad \textnormal{with}\label{Bogomolny}\\
F:=\mathrm{d}{A} + {A}\wedge{A}\quad \textrm{and}\quad D:= \mathrm{d}+{A},\nonumber
\end{align}
where ${A}$ is the gauge field or connection form on a principal SU(2)-bundle over $\mathbb{R}^3$, $F$ is its curvature 2-form or field strength, $D$ is the covariant exterior  derivative or connection and the Higgs field $\phi$ is a section of the associated  $\mathfrak{su}(2)$-bundle. $\star$ is the Hodge dual operation and the Bogomolny equation (\ref{Bogomolny}) is part of the self-duality equations for the related instantons in four dimensions.

The monopole solution has also to satisfy the finite action condition $\int |F|^2<\infty$ and the boundary condition
\begin{equation}
	|\phi|=1-\frac{k}{2r}+ O(r^{-2})\quad \text{as}\;\;r\rightarrow \infty,
\end{equation}  
where the charge $k$ is an integer number.

A clever treatment of monopoles was given by Hitchin \cite{1982CMaPh..83..579H} where he applied the twistor methods in the space $\stlines$ of oriented straight lines (geodesics) in $\mathbb{R}^3$. $\stlines$ has a holomorphic structure given by cross product and $\stlines\equiv\mathbf{T}\mathbb{P}_1(\mathbb{C})$ the holomorphic tangent bundle of the projective line.

Then   the solutions of Bogomolny equations were restated  in terms of complex geometry of $\stlines$ where  the spectral curves were introduced \cite{1982CMaPh..83..579H,hitchin1983construction}.

First consider $E$ the  rank 2 complex vector bundle on $\mathbb{R}^3$ associated to the principal SU(2) bundle. Now one defines a rank 2 complex vector bundle $\tilde{E}$ by defining at each point $z\in \stlines$ a fiber  $E_z$. To each point $z\in \stlines$ there is a corresponding oriented line $l_z \in \mathbb{R}^3$. $E_z$ is given by the the space of sections $s$ of E with support on $l_z$   such that 
\begin{equation}
(u^jD_j-i\phi)s=0,\label{holomrank2}
\end{equation} 
$u$ is the unit tangent vector pointi ng (in the positive direction) along the oriented line $l_z$. It follows from Bogomolny equations that $\tilde{E}$ has a natural holomorphic structure \cite{1982CMaPh..83..579H}. Conversely, from a holomorphic vector bundle $\tilde{E}$ on $\stlines$ one  reconstructs the solution $(A,\phi)$ to Bogomolny equations. But not all section $s$ of $\tilde{E}_l$ satisfy the boundary conditions, which is the vanishing of $s$ at both ends of $l$. 

\section{Spectral curve of a k-monopole}\label{scurves}

For each oriented line $l\in \mathbb{R}^3$, the space of solutions (\ref{holomrank2}) which decay at $+\infty$ is one-dimensional. This space is a holomorphic line bundle and a subbundle of $\tilde{E}_l$ and it belongs to a class of ans\"atze $\mathcal{A}_k$ according to the charge $k$ of the monopole \cite{atiyah1977instantons}. Furthermore, the set of lines for which equation (\ref{holomrank2}) has a solution decaying to zero at both ends forms a compact algebraic curve $S$ in $\mathbf{T}\mathbb{P}_1(\mathbb{C})$. $S$ is called the spectral curve and it has genus $(k-1)^2$.

Inhomogeneous coordinates $(\eta,\zeta) $on $\mathbb{P}_1(\mathbb{C})$ gives local coordinates on $\stlines: (\eta,\zeta) \rightarrow \eta \partial/\partial \zeta$. 
$S$  in terms of such local coordinates  is given by
\begin{equation}
p(\eta,\zeta)=\eta^k+a_1(\zeta)\eta^{k-1}+\dots a_k(\zeta)=0,\label{Spolinomial}
\end{equation}
where $a_i(\zeta)$ is a polynomial of degree 2i.

The polynomial $p(\eta,\zeta)$ is preserved by an antiholomorphic involution $\tau(\eta,\zeta)= (-\overline{\eta}/\overline{\zeta}^2,-\overline{\zeta}^{-1})$, a real structure on $\mathbf{T}\mathbb{P}_1(\mathbb{C})$. Therefore $p(\eta,\zeta)$ depends on $(k+1)^2-1$ real parameters. Since $S$ is constrained by its genus (transcendental or ES  constraint\citep{Ercolani:1989tp}),   the parameter space has dimension 
\begin{equation}
(k+1)^2-1-(k-1)^2= 4k-1.
\end{equation}
This is the dimension of the moduli space of k-monopoles $M_k$. Out of these parameters, the center of mass position of a k-monopole in $\mathbb{R}^3$ can be translated to the origin and the remaining parameter space corresponds to the reduced moduli space $M^0_k$ with dimension $4k-4$.

\subsection{Spectral curve for {\rm k}= 2}\label{scurve2}
This case was extensively studied by Hurtubise \cite{Hurtubise:1983hl}. The spectral curve $S$ is an elliptic curve of genus 1. The real structure of $S$ imposes via Weierstrass $\mathfrak{p}$-function the complex structure $\tau$ of the corresponding torus $\mathbb{C}/\Lambda$ to be purely imaginary and its  corresponding lattice $\Lambda$ to be rectangular.

Factoring out six parameters from translation action and $SO(3)$ action from the polynomial (\ref{Spolinomial}) for the spectral curve of 2-monopoles, two real parameters remain:

\begin{equation}
\eta^2= r_1\zeta^3- r_2\zeta^2-r_1\zeta, \;r_i\in \mathbb{R}, r_1\ge 0 \label{S2polynomial}
\end{equation}
The genus constraint is enforced by matching the above equation to a normal form of the elliptic curve.

First notice that when $r_1=0$, the spectral curve degenerates to two $k=1$ spectral curves.
\begin{equation}\label{scurve1}
\eta=i\frac{\pi}{2}\zeta\quad \textnormal{and} \quad \eta=-i\frac{\pi}{2}\zeta.
\end{equation}
 In this case $r_2=\pi^2/4$, as we show below (\ref{r2}), and there is no free parameter. This is the case\cite{Ward:1981jb,Sutcliffe:1997ec} where a 2-monopole is simply a superposition of two 1-monopoles both centered at the origin of $\mathbb{R}^3$  and it agrees with the fact that the  dimension of the reduced moduli $M^0_1$ is zero. The spectral curves (\ref{scurve1}) are two complex lines tangent  to two different points in $\mathbb{P}_1\equiv S^2$. The symmetries of the spectral curve determines a symmetry of the monopole. In this case, the isotropy group $S^1\times \mathbb{Z}_2$ of the two $k=1$ spectral curves corresponds to the axial symmetry of the two 1-monopoles solution and the exchanging the two 1-monopoles. 

For $r_1>0$, the spectral curve can be reparametrized to:
\begin{gather}\label{g2g3}
\tilde{\eta}^2= 4\tilde{\zeta}^3-g_2(\Lambda)\tilde{\zeta}-g_3(\Lambda), \;\textrm{where},\\ \nonumber 
\tilde{\eta}=\eta (4/r_1)^{1/2},\quad\tilde{\zeta}=\zeta -\tfrac{r_2}{3r_1}, \\\nonumber
g_2(\Lambda)=60 G_4(\Lambda)= 12(r_2/3r_1)^2+4\quad\textrm{and},\\
g_3(\Lambda)=140 G_6(\Lambda)=8(r_2/3r_1)^3+4(r_2/3r_1).\label{eisenstein}
\end{gather}
and $G_4$ and $G_6$ are Eisenstein series of weight 4 and 6, respectively, functions of the retangular lattice $\Lambda$ with real generator $l_r=\sqrt{4r_1}$ and imaginary generator $l_i$.

A homothetic scaling of the lattice transform $g_2$ and $g_3$:  
\begin{equation}\label{gscaling}
g_i(m\Lambda)=m^{-2i}g_i(\Lambda),\, i=2,3,\quad m\in \mathbb{R}^*
\end{equation}
and the polynomial (\ref{g2g3}) is preserved if we reparametrize $(\tilde{\eta},\tilde{\zeta})$ to absorb such scaling
\begin{equation}\label{polyscaling}
\tilde{\eta}\rightarrow m^{-3}\tilde{\eta}\quad\textnormal{and}\quad \tilde{\zeta}\rightarrow m^{-2}\tilde{\zeta}.
\end{equation}
Therefore we should consider modular functions such as $I= 27g_3^2/g_2^3$.  This function will be  invariant to scaling of the lattice $\Lambda$ and it will only depend on the ratio of the generators $\tau=l_i/l_r$, a purely imaginary number.  To make this dependency explicit we can reparametrize the variables as above  with $m= l_r^{-1}$ and obtain $g_2(\tau)= 16r_1^2g_2(\Lambda)=60G_4(\tau)$ and $g_3(\tau)= (4r_1)^3g_3(\Lambda)=140G_6(\tau)$. In terms of the normalized Eisenstein series $E_{2i}$ and Riemann zeta functions, $G_{2i}(\tau)=2\zeta(2i)E_{2i}(\tau)$ with,
\begin{equation}
\label{eisenstein}
E_{2i}(q):=1+b_i\sum_{n=1}^\infty \left (\sum_{d\mid n}d^{2i-1}\right )q^{n},\ \ i=1,2,3,
\end{equation}
with $(b_1,b_2,b_3)=(-24, 240, -504)$  and $q= e^{i2\pi\tau},\;\tau\in \mathbb{H}=\{\tau \in \mathbb{C}|\Im(\tau)\ge 0)$ for convergency of the series. When $\tau$ is purely imaginary, $E_{2i}(\tau)$ take values in $\mathbb{R}$.

From (\ref{g2g3}), $I$ depends on the ratio $(r_1/r_2)^2$. Notice that in the limit $r_1\rightarrow 0$, the discriminant of the elliptic curve $\Delta= g_2^3-27g_3^2=0$ and  $I(\tau)=1$. This corresponds to the limit $\tau\rightarrow i \infty$. In order to proceed showing that $r_2=\pi^2/4$ in this limit, we explore $I$ near 1.

\begin{gather}\label{Inear1}
\frac{1-I(\tau)}{27}= \frac{64}{j(\tau)}= 2^{12}3^3(q-744q^2+\dots)\quad\textnormal{where,} \\\nonumber
j(\tau)=\frac{1728g_2^3}{\Delta}\;\textnormal{is the Klein modular function }.
\end{gather}
From (\ref{eisenstein}), 
\begin{equation}\label{Ir}
\frac{1-I}{27}=\frac{r^4(\frac{1}{4}+r^2)}{(1+3r^2)^3},\quad\textnormal{with }  r=\frac{r_1}{r_2}.
\end{equation}
We see that the limit $r_1,r\rightarrow 0$ coincides with $\tau\rightarrow i\infty $ or $ q\rightarrow 0$. Near this limit, we keep only the first term of the Eisenstein series $G_4(\tau)$. From (\ref{eisenstein}):
\begin{gather}\label{r2}
g_2(\Lambda)= \frac{1}{16 r_1^2}\frac{64}{3}r_2^2(1+3r^2)=\frac{1}{16 r_1^2}60G_4(\tau)\xrightarrow{\:q \to 0 \: }\frac{1}{16 r_1^2} \frac{(2\pi)^4}{12}.\\
\textnormal{Therefore,}\quad r_2\xrightarrow{\:q \to 0 \: } \pi^2/4.
\end{gather}
 Hence, the point $(r_1,r_2)=(0,\pi^2/4)$ corresponds to the singular point $(\Delta=0)$ of the real elliptic curve $S$ factored out by $SO(3)$ action and $\mathbb{R}^3$ translations, corresponding to $\tau=i\infty$. The total space of parameters has one real dimension and it corresponds to purely imaginary $\tau, \, -i\tau\in \mathbb{R}_{\geq 0}\cup \infty $.

\section{Gauss Manin connection in disguise}\label{gmcd}
In \cite{movasati2008differential,movasati2012ramanujan} Movasati realized that the Ramanujan relations between Eisenstein series can be computed using the Gauss-Manin connection of families of elliptic curves. Later in a private communication, Pierre Deligne called "Gauss-Manin connection in disguise" the vector field that best express the property of Griffths transversality\cite{griffiths1968periods1,griffiths1968periods2} of a  Gauss-Manin connection. Since then the method Gauss-Manin connection in disguise has been applied in many families of algebraic curves and relating them to differential equations and automorphic forms or modular-type functions \cite{alim2016gauss,movasati2014gauss,movasati2015gauss,movasati2016gauss}. 

Our interest are in finding differential equations in the universal families of spectral curves of $k$-monopoles. The method developed for the elliptic curve \cite{movasati2012ramanujan,ho14} still need to be thought through for spectral curves because of the reality condition on the spectral curves, but our general argument is that the reality condition is lifted for the sake of finding the Gauss-Manin connection and the respective vector field and later the reality condition is imposed on the domain of solutions of the vector field equations.  

We present here the two known cases of  families of enhanced elliptic curves which correspond to  geometric expressions of Ramanujan and Darboux-Halphen differential equations. A good review is in Movasati's lectures\cite{ho14}. In both cases, the idea is to define the moduli of enhanced elliptic curve  by including information about its Hodge structure. Then, one calculates its Gauss-Manin connection and finds the appropriate vector field.  
 
\subsection{Ramanujan differential equations}
We extend the one-parameter family of elliptic curves (\ref{g2g3}). Recall that the first de Rham cohomology $H^1_\dR(E)$ of an elliptic curve $E$ is a two-dimensional vector space. The moduli $\mathtt{T_R}$ of pairs $(E,[\alpha, \omega])$, where    $\alpha, \omega\in H^1_\dR(E)$ are a basis of the cohomology classes  of  1-forms in $E$ with $\alpha$  a regular differential 1-form on $E$ and $\omega$ such that $\langle\alpha,\omega\rangle=1$. In a equivalent way, $\mathtt{T_R}$ can be defined as the moduli of pairs$(E,[\omega])$, $\omega\in H^1_\dR(E)\backslash F^1$ and there is a unique regular 1-form $\alpha$ in the Hodge filtration $ F^1\subset H^1_{\dR}(E)$ such that $\langle\alpha,\omega\rangle=1$. Therefore $\mathtt{T_R}$ is a three-dimensional space and it has a  corresponding  universal family of elliptic curves
\begin{equation}\label{Et}
E_t: y^2=4(x-t_1)^3-t_2(x-t_1)-t_3,
\end{equation}
with $\alpha = [\frac{dx}{y}],\; \omega=[\frac{xdx}{y}]$ and the moduli $\mathtt{T_R}$ can be expressed as
\begin{equation}
\mathtt{T_R}:=\{(t_1,t_2,t_3)\in \C^3| \Delta=t_2^3-27t_3^2\neq 0\}.
\end{equation}
The Gauss-Manin connection of the above family $E_t$, written in the basis $(\alpha,\omega)$ is given by
\begin{equation}
\label{4maR}
 \nabla\begin{pmatrix}\alpha\\ \omega\end{pmatrix}=
A
\begin{pmatrix}\alpha\\ \omega\end{pmatrix},
\end{equation}
where
\begin{gather}
A=\frac{1}{\Delta}
\begin{pmatrix}
-\tfrac{3}{2}t_1\beta-\tfrac{1}{12}d\Delta&\tfrac{3}{2}\beta\\ 
\Delta dt_1-\tfrac{1}{6}t_1d-(\tfrac{3}{2}t_1^2+\tfrac{1}{8}t_2)\beta&\tfrac{3}{2}t_1\beta\Delta+\tfrac{1}{12}d\Delta
\end{pmatrix},\\ \nonumber
\beta= 3t_3dt_2-2t_2dt_3.
\end{gather}

 In $\mathtt{T_R}$ there is a unique vector field $R$ such that\cite{ho14}   
\begin{equation}\label{GMCD}
\nabla_R(\alpha)=-\omega,\quad \nabla_R(\omega)= 0.
\end{equation}

The vector field $R$ is given by the Ramanujan differential equations\cite{ra16}
\begin{gather}\label{Ramanujaneq}
\left\lbrace
\begin{aligned}
\frac{\partial t_1}{\partial \tau}&=t_1^2-\tfrac{1}{12}t_2,\\
\frac{\partial t_2}{\partial \tau}&=4t_1t_2-6t_3,\\
\frac{\partial t_3}{\partial \tau}&=6t_1t_3-\tfrac{1}{3}t_2^2,
\end{aligned}
\right.
\end{gather}
where $\tau$ is a direction in the moduli $\mathtt{T_R}$ chosen by $R$.

$R$ has been called Ramanujan vector field, modular vector field and lately, Gauss-Manin connection in disguise. 

It may seem that in this process the moduli was not only enhanced but also enlarged since the moduli of elliptic curves has 1 complex dimension while $\dim(\mathtt{T_R})=3$. But if we look to the solution of (\ref{Ramanujaneq}),
\begin{equation}\label{Ramanujansolution}
(t_1(\tau),t_2(\tau),t_3(\tau)):=(\frac{2\pi i}{12}E_2(\tau),\ 12(\frac{2\pi i}{12})^2E_4(\tau), 8(\frac{2\pi i}{12})^3E_6(\tau)),
\end{equation}
 
$E_4$ and $E_6$ are modular forms of weight $k=4, 6$, respectively,
\begin{equation*}E_k(\dfrac{a\tau+b}{c\tau+d})=({c\tau+d})^kE_k(\tau),\; \begin{pmatrix} a&b\\c&d\end{pmatrix}\in SL(2,\mathbb{Z}), 
\end{equation*}
and $E_2$ is a quasi-modular form of weight 2:
\begin{equation*}E_2(\dfrac{a\tau+b}{c\tau+d})=({c\tau+d})^2E_2(\tau)+ \frac{12}{2\pi\mathrm{i}}c(c\tau+d) .
\end{equation*}

Therefore, $R$ vector field corresponds to a map of $\tau\in \mathbb{H}$ to $ (t_1,t_2,t_3)\in \mathtt{T_R}$. 

The transformation of $E_2, E_4,$ and $E_6$, under the action of the modular group $SL(2,\mathbb{Z})$ on $\tau$, preserving the lattice that defines the elliptic curve $E_t\equiv \C^2/\Lambda_{\tau}$, reveals that   
 there is a group of isomorphisms ${G}$ that acts on $\mathtt{T_R}$.  The quotient moduli $\mathtt{T_R}/{G}$ has one complex dimension.  
For $g\in{G}$, 
\begin{eqnarray*} \label{G}
[\alpha,\omega]&\xrightarrow{\bullet g}&[\alpha,\omega]g=[c\alpha,c'\alpha+\omega /c]\\
t=(t_1,t_2,t_3)&\xrightarrow{\bullet g}&t'=(c^{-2}t_1+c'/c,c^{-4}t_2,c^{-6}t_3)\\
(x,y)&\xrightarrow{\bullet g}&(c^{-2}x+c'/c,c^{-3}y)\\
{E_t}&\equiv&{E_{t'}} 
\end{eqnarray*}
$$g=\begin{pmatrix}c&c'\\0&c^{-1}
\end{pmatrix}=\begin{pmatrix}c&0\\0&c^{-1}
\end{pmatrix}\begin{pmatrix}1&c'/c\\0&1
\end{pmatrix},\quad c\in \C^*, c'\in \C$$
Notice that this group action preserves the intersection form $\langle\alpha,\omega\rangle=\langle c\alpha,c'\alpha+\omega/c\rangle=1$.

\subsection{Darboux-Halphen differential equations}
\label{hddeqs}

In this case, the enhanced elliptic curve is given by a triple  $(E,(P,Q),\omega)$, where $E$ is an elliptic curve, $\omega\in H^1_\dR(E)\backslash F^1$,
and $P$ and $Q$ are  a pair of points of $E$ that generates the $2$-torsion subgroup 
 with the Weil pairing $e(P,Q)=-1$. 
The points $P$ and $Q$ are given by $(T_1,0)$ and $(T_2,0)$. In here, the torsion data is necessary because the modular group , or group of lattice equivalence, of this enhanced curve is the congruence subgroup $\Gamma(2)\subset SL_2(\mathbb{Z})$, which has index $\left[SL_2(\mathbb{Z}):\Gamma(2)\right]=6$. The torsion data choose one out of six enhanced elliptic curves with same $(E,\omega)$ pairs.

For each choice of $\omega$, there is a unique regular differential 1-form in the Hodge filtration $ \omega_1\in F^1$, such that $\langle \omega,\omega_1\rangle=1$ and $\omega,\, \omega_1$ together form a basis of $H^1_\dR(E)$. The corresponding universal family of elliptic curves is given by

\begin{align}\label{ET}
E_T:\ \ y^2-4(x-T_1)(x-T_2)(x-T_3)=0,\\\nonumber 
\textnormal{and moduli } \mathtt{T_H}=\{(T_1,T_2,T_3)\in \C^3|\, T_1\neq T_2\neq T_3\}.
\end{align}

 In fact, this universal family patches together all six enhanced elliptic curves, separated by singularity borders $T_i=T_j$, due its symmetry under permutation of $T_1, T_2$ and $T_3$. Hence, it is a six-fold cover of the enhanced elliptic curve $(E,\omega)$ that is isomorphic to the enhanced elliptic curve $(E,[\alpha,\omega])$ for the full modular group $SL_2(\mathbb{Z})$. 

The Gauss-Manin connection of the family of elliptic curves $E_T$ written in the basis $\frac{dx}{y},\ \frac{xdx}{y}$ is given
as bellow:
\begin{equation}
\label{4maH}
 \nabla\begin{pmatrix}\frac{dx}{y}\\ \frac{xdx}{y}\end{pmatrix}=
A
\begin{pmatrix}\frac{dx}{y}\\ \frac{xdx}{y}\end{pmatrix}
\end{equation}
where 
\begin{align*}
A &=& 
 \frac{dT_1}{2(T_1-T_2)(T_1-T_3)}\mat{-T_1}{1}{T_2T_3-T_1(T_2+T_3)}{T_1}+\\
 & &
\frac{dT_2}{2(T_2-T_1)(T_2-T_3)}\mat{-T_2}{1}{T_1T_3-T_2(T_1+T_3)}{T_2}+\\
& & \frac{dT_3}{2(T_3-T_1)(T_3-T_2)}\mat{-T_3}{1}{T_1T_2-T_3(T_1+T_2)}{T_3}.
\end{align*}
In the parameter space of the family of elliptic curves $E_T$ there is a unique vector field $H$, such that
\begin{equation}
\label{niteroi}
\nabla_{H}(\frac{dx}{y})= -\frac{xdx}{y},\ \nabla_{H}(\frac{xdx}{y})= 0.
\end{equation}
The vector field $H$ is given by the Darboux-Halphen differential equation
\begin{gather}\rm
\label{halphen}
\left \{ 
\begin{aligned}
\frac{\partial T_1}{\partial\tau}=T_1(T_2+T_3)-T_2T_3,\\ 
\frac{\partial T_2}{\partial\tau}= T_2(T_1+T_3)-T_1T_3, \\
 \frac{\partial T_3}{\partial\tau}= T_3(T_1+T_2)-T_1T_2.
\end{aligned} 
\right.
\end{gather}
where $\tau$ is a direction in $\mathtt{T_H}$ chosen by $H$. This vector field $H$ has been called Darboux-Halphen vector field, and lately, Gauss-Manin connection in disguise.
 Similarly to the previous subsection, there is a group of isomorphism $G'$ in $\mathtt{T_H}$ with two generators ( addictive and multiplicative) and the quotient $\mathtt{T_H}/G'$ has one complex dimension. For $g\in{G'}$,
\begin{eqnarray*}
T=(T_1,T_2,T_3)&\xrightarrow{\bullet g}&T'=(c^{-2}T_1+c'/c,c^{-2}T_2+c'/c,c^{-2}T_3+c'/c),\\
(x,y)&\xrightarrow{\bullet g}&(c^{-2}x+c'/c,c^{-3}y),\\
{E_T}&\equiv&{E_{T'}}, \qquad  c\in \C^*,\; c'\in \C 
\end{eqnarray*}
 
\subsection{\texorpdfstring{$\mathtt{T_R}$}{Lg} and \texorpdfstring{$\mathtt{T_H}$}{Lg}}\label{TRTH}
There is a algebraic morphism between the moduli $f: \mathtt{T_H}\longrightarrow  \mathtt{T_R}$  given by a match between the elliptic curves (\ref{4maR}) and (\ref{4maH}):
\begin{gather}\label{morphismHR}
(T_1,T_2,T_3)\rightarrow(T, -4\sum_{1\leq i< j\leq 3}(T-T_i)(T-T_j),4(T-T_1)(T-T_2)(T-T_3)),\\\nonumber
\textnormal{where }\quad T=\frac{T_1+T_2+T_3}{3} 
\end{gather}
Since the permutations of $T_1, T_2$ and $T_3$ in $\mathtt{T_H}$ are mapped to the same point in $\mathtt{T_R}$, $f$ is a six to one map, but if we restrict to the region $|T_1|<|T_2|<|T_3|$ in $\mathtt{T_H}$, $f$ is an isomorphism. 

\section{From the metric of the moduli space of 2-monopoles to the spectral curve}\label{spectraltometric}

In their book\cite{atiyah2014geometry}, Atiyah and Hitchin showed that the reduced moduli $M^0_2$ of 2-monopoles  is a four dimensional hyperk\"ahler manifold and an anti-self-dual  Einstein manifold. Since $M^0_2$ admits $SO(3)$ isometry, the metric is a Bianch IX \cite{Bianchi2001}. This is consequence of  the hyperkähler structure of $M^0_2$ which has an $S^2$-parameter family of complex structures: if I, J, K are covariant constant complex structures in $M^0_2$ then $aI+bJ+cK$ is also a covariant constant complex structure in $M^0_2$ given that $a^2+b^2+c^2=1,\, (a,b,c)\in \mathbb{R}^3$. The SO(3) isometry rotates this $S^2$ in a standard way. Following Atiyah and Hitchin\cite[Chapter~8,9]{atiyah2014geometry} and our review\cite{Morales:2017phm}, the 4-dimensional Bianchi IX metric is cast in the form
\begin{equation}
\label{bianchiix} ds^2 = (abc)^2 d\rho^2 + a^2  (\sigma_1)^2 + b^2(\sigma_2)^2 + c^2  (\sigma_3)^2,
\end{equation}
 where $a, b, c$ are real functions of $\rho$ which  parametrizes  each SO(3) orbit in $M^0_2$ or, in other words,  it parametrizes  trajectories orthogonal to these orbits in $M^0_2$. The $SO(3)$ invariant 1-forms $\sigma_i$ are dual to the standard basis $X_1, X_2, X_3$ of its Lie algebra. They obey the structure equation 
 \begin{equation}
 d\sigma_i=-\sigma_j\wedge\sigma_k,
 \end{equation}
for all cyclic permutations  $(i, j, k)$ of $(1, 2, 3)$.

The self-duality equations lead to the following equation
\begin{equation}\label{sdeq}
\frac{2}{a}\frac{da}{d\rho}= b^2+c^2-a^2- 2bc,
\end{equation}
and two other equations obtained by cyclic permutations  of (a,b,c). Upon reparametrization 
\begin{equation}\label{reparam}
a^2=\frac{\Omega_2\Omega_3}{\Omega_1},\quad b^2=\frac{\Omega_3\Omega_1}{\Omega_2},\quad c^2=\frac{\Omega_1\Omega_2}{\Omega_3}.
\end{equation}
we obtain from (\ref{sdeq}) the three Darboux-Halphen differential equations:
\begin{equation}\label{halphereal}
\dot{\Omega}_i= \Omega_i(\Omega_j+\Omega_k)-\Omega_j\Omega_k,
\end{equation}
where (i,j,k) run over  cyclic permutations of (1,2,3) and the derivative (denoted by dot) is with respect to $\rho$, a real parameter. When we put together the fact that the $k=2$ spectral curve corresponds to an elliptic curve with purely imaginary $\tau$ and consequently, real valued Eisenstein series $E_{2i}(\tau)$, we conclude that the solution (\ref{Ramanujansolution}) of Ramanujan equations (\ref{Ramanujaneq}) $(t_1(\tau),t_2(\tau),t_3(\tau))$ with purely imaginary  $t_1$ and $t_3$  will only match $(\Omega_1,\Omega_2,\Omega_3)\in \mathbb{R}^3$ via $f$ morphism (\ref{morphismHR}) if, 
\begin{equation}\label{tauomega}
\tau=i\rho\quad \textnormal{and}\quad \Omega_j=iT_j
\end{equation} 
From the  discussion in \ref{hddeqs}, the space 
\begin{equation}
\mathtt{T_\Omega}:=\{(\Omega_1,\Omega_2,\Omega_3)\in \mathbb{R}^3| \Omega_1\neq\Omega_2\neq\Omega_3\},
\end{equation}
 corresponds to a section of purely imaginary coordinates $(T_1,T_2,T_3)$ of the moduli space $\mathtt{T_H}$ of the enhanced elliptic curve $E_T$. As shown below, this sectioning corresponds to the reality condition of the spectral curve.  In $\stlines$, the reality condition is given by a real structure $\varsigma$ that acts by reverting the orientation of the corresponding lines in $\mathbb{R}^3$ \cite{1982CMaPh..83..579H}. In terms of $\stlines$ coordinates, it corresponds to $\varsigma(\eta,\zeta)= (-\overline{\eta}/\overline{\zeta}^2,-1/\overline{\zeta})$. Hence, a real curve in $\stlines$ is invariant under $\varsigma$ as in (\ref{S2polynomial}). For the reparameterized  spectral curve (\ref{g2g3}) this means that the coefficients $g_2, g_3$ of the defining polynomial are real. Looking at their values in terms of Eisenstein series and using the group $G$ of isomorphisms (\ref{G}) of the moduli $\mathtt{T_R}$ with $c= (-2\pi i)^{1/2}$, we see that these real coordinates correspond to
\begin{equation}\label{gt}
g_2(\tau) = (-2\pi i)^2 t_2(\tau), \qquad g_3(\tau)=(-2\pi i)^3t_3(\tau),
\end{equation}
where $t_i(\tau)$ are the solution (\ref{Ramanujansolution}) of the Ramanujan equations (\ref{Ramanujaneq}). By $G$ isomorphism $(0,g_2(\tau),g_3(\tau))\equiv (t_1(\tau),t_2(\tau),t_3(\tau))$ in  $\mathtt{T_R}$, where an additive transformation of $G$ sets the first coordinate to zero. The values of $t_i(\tau)$ in (\ref{Ramanujansolution}), for imaginary $\tau$ implies 
by  the algebraic morphism $f$ in \ref{TRTH} that the solution $(T_1(\tau),T_2(\tau),T_3(\tau))$ of the Darboux-Halphen equations (\ref{halphen}) is  restricted to pure imaginary values when $\tau$ is imaginary, as mentioned before.

Also by  the morphism $f$ in  and group isomorphisms in $\mathtt{T_R}$ and $\mathtt{T_H}$, $\mathtt{T_\Omega}$ maps to the section $ \{ (t_1,t_2,t_3)\in \mathtt{T_R}| (-2\pi i)^{i}t_i\in \mathbb{R}\}$ of the moduli $\mathtt{T_R}$  of the enhanced elliptic curve $E_t$ (\ref{Et}). This section of $\mathtt{T_R}$ satisfies the  reality condition of the spectral curve, and we call it the real section of $\mathtt{T_R}$ .

Therefore we recognize that $\mathtt{T_\Omega}$ is a six-fold cover of the moduli of the enhanced spectral curve, which we define below:
\begin{definition}
The moduli of an enhanced spectral curve $\tilde{S}_k$ of a k-monopole is the real section of the moduli of   the enhanced algebraic curve $(S_k^{\C}, \{[\alpha_i]\})$ where $S_k^{\C}$ is the family of algebraic curves in $\stlines$  given by (\ref{Spolinomial}) without the real structure constraints and $\{[\alpha_i]\}$ is a basis of classes of algebraic de Rham cohomology  $H^1_\dR(S_k^{\C})$ of  differential 1-forms on $S_k^{\C}$ with fixed intersection matrix $\Phi_{ij}=\langle \alpha_i,\alpha_j\rangle$.  
\end{definition}
Below we summarize our findings in a form of a theorem:
\begin{theorem}
The moduli of enhanced spectral curves  of $SU(2)$ monopoles of charge 2 quotient by $SO(3)$ action and  translations in $\mathbb{R}^3$ corresponds to $\mathtt{T_\Omega}$, a real section of $\mathtt{T_H}$, quotient by permutations of $\Omega_1, \Omega_2$ and $\Omega_3$. Furthermore, the self-dual curvature equation (\ref{sdeq}) corresponds to the Ramanujan vector field in $\mathtt{T_R}$ upon reparametrization  (\ref{reparam},\ref{tauomega}) and $f$ isomorphism (\ref{morphismHR}). 
\end{theorem}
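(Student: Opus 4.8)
The plan is to prove the theorem in two logically separate halves, mirroring its two sentences. The first half is an identification of moduli spaces (the enhanced spectral curves of $2$-monopoles, modulo $SO(3)$ and translations, correspond to $\mathtt{T_\Omega}$ modulo permutations); the second half is a dynamical statement (the self-duality equation \eqref{sdeq} is the Ramanujan vector field in disguise). I would treat each half separately and then knit them together through the chain of morphisms $\mathtt{T_\Omega}\hookrightarrow\mathtt{T_H}\xrightarrow{f}\mathtt{T_R}$ already established in \ref{TRTH}.

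For the first half, I would start from the reduced spectral curve in the normalized Weierstrass form \eqref{g2g3}, whose only surviving moduli after factoring out the six parameters of the $SO(3)$ and translation actions are encoded in $(g_2,g_3)$. The key point is that the reality (real-structure) constraint imposed by $\varsigma(\eta,\zeta)=(-\overline{\eta}/\overline{\zeta}^2,-1/\overline{\zeta})$ forces $g_2,g_3\in\mathbb{R}$ and hence $\tau$ purely imaginary, as was shown in \ref{scurve2}. I would then invoke the Definition of the enhanced spectral curve: adjoining a marked cohomology basis with fixed intersection matrix promotes the one real modulus to the three-dimensional enhanced moduli, which is precisely the real section of $\mathtt{T_R}$ cut out by $(-2\pi i)^i t_i\in\mathbb{R}$, via the identification \eqref{gt} $g_2=(-2\pi i)^2 t_2$, $g_3=(-2\pi i)^3 t_3$ together with the additive $G$-action that normalizes $t_1$ to $0$. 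Transporting this real section across the isomorphism $f$ of \eqref{morphismHR} and using \eqref{tauomega}, $\tau=i\rho$ and $\Omega_j=iT_j$, lands me exactly in $\mathtt{T_\Omega}$; the six-to-one nature of $f$ off the permutation locus then accounts for the ``six-fold cover'' and the quotient by permutations of $\Omega_1,\Omega_2,\Omega_3$. This establishes the first sentence.

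For the second half I would exploit the defining characterization of the Ramanujan vector field $R$ as the unique field satisfying \eqref{GMCD}, and of the Darboux--Halphen field $H$ as the unique field satisfying \eqref{niteroi}. Because $f$ matches the two families and their Gauss--Manin connections, $f_*$ must carry $H$ to $R$ (both being the unique fields adapted to Griffiths transversality on the matched cohomology). I would then show that the reparametrization \eqref{reparam} together with \eqref{tauomega} sends the real self-duality system \eqref{sdeq} to the real Darboux--Halphen system \eqref{halphereal}, which is the restriction of \eqref{halphen} to the purely imaginary section; chasing this back through $f_*$ identifies it with $R$ on the real section of $\mathtt{T_R}$. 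The purely imaginary $\tau$ guarantees the Eisenstein series are real-valued, so the solution \eqref{Ramanujansolution} stays inside the real section and the flow is consistent with the reality constraint.

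I expect the main obstacle to be the careful bookkeeping at the reality section rather than any single hard computation: one must verify that imposing the real structure \emph{before} versus \emph{after} computing the Gauss--Manin connection gives the same answer, i.e. that the vector field $R$ is tangent to the real section and that $f_*H=R$ survives restriction to purely imaginary coordinates. This is exactly the ``lift the reality condition, compute the connection, then re-impose reality'' step flagged in \ref{gmcd}, and it is where the $c=(-2\pi i)^{1/2}$ rescaling in $G$ and the matching of weights (the factors $(-2\pi i)^i$) must be tracked consistently so that $g_2,g_3$ real is equivalent to $(-2\pi i)^i t_i$ real is equivalent to $\Omega_j\in\mathbb{R}$. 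Once these three reality conditions are shown to be mutually equivalent under $f$ and the $G,G'$ actions, the theorem follows by assembling the two halves.
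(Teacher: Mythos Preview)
Your proposal is correct and follows essentially the same approach as the paper: both rely on the $f$ isomorphism (restricted to an ordering of the $T_i$ to make it one-to-one), the reality/real-section identifications set up in Section~\ref{spectraltometric}, and the reparametrizations \eqref{reparam}, \eqref{tauomega} to match the self-duality system to the Darboux--Halphen/Ramanujan vector fields. Your write-up is more explicit than the paper's terse proof---in particular your use of the uniqueness characterizations \eqref{GMCD} and \eqref{niteroi} to conclude $f_*H=R$, and your flagging of the tangency of $R$ to the real section---but the underlying argument is the same.
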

\begin{proof}

This theorem is  proved by $f$ isomorphism under restriction  $|T_1|< |T_2| <|T_3|$ in $\mathtt{T_H}$. 
The SO(3) isometry in $M^0_2$ means that this 4-manifold can be expressed by 3-dimensional SO(3) orbits and a orthogonal trajectory parametrized by $\rho$. Hence, $M^0_2$ quotient by SO(3) action is a space of one real dimension parametrized by $\rho$. Accordingly, the spectral curve $S_2$, quotient by SO(3) action and translations, depends on a single parameter, after genus or ES\cite{Ercolani:1989tp}  constraint is imposed. When we work with the moduli of enhanced spectral curves, two extra parameters related to Hodge structure of the curve are added, but the vector field equation Gauss-Manin connection in disguise shows that   
the three parameters of the moduli $\mathtt{T_{H}}$ depend on a single real parameter. 
\end{proof}
 
The moduli of $\tilde{S}_2$ does not include the point $r_1=0$ and $r_2=\pi^2/4$ of zero discriminant  where the curve $S_2$ degenerates to two $S_1$. This point can be mapped to a point of zero discriminant in $E_T$ (\ref{ET}) and it is given by the $\tau=i \infty$ limit in the solution of the system (\ref{halphereal}), with appropriate lattice scaling to match $g_2$ in (\ref{eisenstein},\ref{r2}):
\begin{align}
\Omega_i(\rho)=\frac{\pi}{r_1}\tfrac{\partial}{\partial\rho}\left(\log \theta_{i+1}(i\rho)\right),\quad \textnormal{where}\quad \rho = -i \tau\quad \textnormal{and}\\
\left \{ \begin{array}{l} \theta_2(\tau ):=\sum_{n=-\infty}^\infty
q^{\frac{1}{2}(n+\frac{1}{2})^2}
\\
\theta_3(\tau ):=\sum_{n=-\infty}^\infty q^{\frac{1}{2}n^2}
\\
\theta_4(\tau ):=\sum_{n=-\infty}^\infty (-1)^nq^{\frac{1}{2}n^2}
\end{array} \right.,
\ q=e^{2\pi i \tau },\  \tau \in \uhp.
\end{align}
Notice that $r_1/r_2$ is function of $\tau=i\rho$ and in the limit $\rho\rightarrow\infty$ we can write a explicit relation 
\begin{equation}
r_2\rightarrow \pi^2/4,\quad \textnormal{and} \quad r_1\rightarrow \pi^2q^{1/4}
\end{equation}
Therefore, near the limit $\rho\rightarrow\infty\; (q<< 1)$ we have
\begin{equation}
\Omega_1\approx -\frac{q^{1/4}}{4},\quad \Omega_2\approx -2q^{1/4},\quad \Omega_3\approx 2q^{1/4}
\end{equation}
and the metric of $M^0_2$ becomes
\begin{equation}
ds^2\approx 4q^{1/4}d\rho^2+q^{3/4}(\sigma_1)^2+\dfrac{q^{-1/4}}{4}\left((\sigma_3)^2+(\sigma_2)^2\right)
\end{equation}
 The metric is singular at $q=0$, but asymptotically, two of the coefficients of the metric (\ref{bianchiix}) become equal. In this case, the isometry grows to $SO(3)\times SO(2)$ \cite{manin2015symbolic}, where $SO(2)$ action corresponds to the axial symmetry of two 1-monopole solution and it corresponds to the $S^1$ isotropy  subgroup of the spectral curve \cite{Hurtubise:1983hl}. In other words, the asymptotic behavior of the metric confirms the behavior of the spectral curve at $r_1=0$.
 
 Furthermore, at infinite $\rho$ distance, the $SO(3)\times SO(2)$ orbit is a 2-torus Hopf fibration of the 3-sphere $S^3$, which confirms the fact that the manifold $M^0_2$ is an asymptotically locally Euclidean (ALE) space. This fact together with self-duality equations, characterizes it as a gravitational instanton configuration\cite{MR535151}. These are  elements to take in consideration for finding metrics of  $ M^0_k, k>2$.

\section{Conclusion}\label{conclusion}
We hope this article pave the way for future contributions in understanding the moduli $M^0_k$ of SU(2) monopoles in YMH theory. Among the obstacles, there are the growing computational challenge of Gauss-Manin Connection in Disguise for larger $k$ and the need to understand the homomorphism between vector fields in the enhanced spectral curves and curvature equations in the moduli $M^0_k$.  

The later obstacle is related to the fact that the universal families of curves can be written using different choices of parametrization, which yield different set of differential equations with  different algebraic group of transformations of the moduli (where lattice scaling is one of the operations) \cite{ho14}. In the well known case of elliptic curves, the different choices of parametrization of the universal families for the enhanced elliptic curves takes place according to the choices of congruence subgroups $\Gamma$ of the modular group $SL_2(\mathbb{Z})$\cite{ho14}. The moduli parametrization of the enhanced curves are lifted to modular-type functions under algebraic  group action in the moduli with distinct weights.  We notice that the canonical form (\ref{Spolinomial}) of spectral curves of $k$-monopoles leads to Ramanujan type of parametrization with parameters with distinct scaling weights, while we expect that the curvature equations from $M^0_k$ leads to Darboux-Halphen type of parametrization with parameters with same (scaling) weight. Therefore, another future step in this projects is  to find new modular-type functions attached to $S_k$ curves that will play a role on defining the metric of $M^0_k$. Such role will depend on the symmetries of the moduli that may define the behavior of the metric of $M^0_k$ under algebraic group of tranformations of the moduli $\mathtt{T}$ of the enhanced curve.In summary, the results of $M^0_2$ suggest that the terms of the metric of $M^0_k$ will be  (quasi-)homogeneous polynomials or rational functions of modular-type functions that will correspond to coordinates of the moduli $\mathtt{T}$ of enhanced spectral curves $\tilde{S}_k$ satisfying a unique set of  vector field equations in $\mathtt{T}$ \cite{movasati2015gauss}.

\begin{acknowledgments}
During the period of preparation of the manuscript MACT was fully sponsored by CNpQ-Brasil. The  author profited by the rich academic environment at IMPA and by many interactions with Hossein Movasati, whose work is the basis of this project. My sincere thanks go to him and my colleagues from the project on DH equations R. Roychowdhury, Y. Nikdelan and J. A. Cruz Morales, with whom my first thoughts on this project were shared. 
\end{acknowledgments}

%
%

%


\bibliographystyle{aipnum4-1}
\bibliography{SpectralGMCD}

\end{document}